\newtheorem{definition}{Definition}
\newtheorem{lemma}{Lemma}
\newtheorem{problem}{Problem}
\newtheorem{assumption}{Assumption}
\acrodef{ltl}[LTL]{linear temporal logic}
\acrodef{pomdps}[POMDPs]{partially observable Markov decision processes}
\providecommand{\abs}[1]{\lvert#1\rvert}
\newcommand{\defeq}{\vcentcolon=}
\newcommand{\knows}{\mathbf{Knows}}
\newcommand{\obs}{\mathsf{Obs}}
\newcommand{\calAP}{\mathcal{AP}}
\newcommand{\calO}{\mathcal{O}}
\newcommand{\calB}{\mathcal{B}}
\newcommand{\calP}{\mathcal{P}}
\newcommand{\calD}{\mathcal{D}}
\newcommand{\calA}{\mathcal{A}}
\newcommand{\calN}{\mathcal{N}}
\newcommand{\calE}{\mathcal{E}}
\newcommand{\last}{\mathsf{Last}}
\newcommand{\Pref}{\mathsf{Pref}}
\newcommand{\Inf}{\mathsf{Inf}}
\newcommand{\nat}{\mathbb{N}}
\newcommand{\br}{\mathsf{BRTree}}
\newcommand{\update}{\mathsf{Update}}
\newcommand{\truev}{\mathsf{true}}
\newcommand{\falsev}{\mathsf{false}}
\acrodef{dba}[DBA]{deterministic B\"uchi automaton} \title{\LARGE \bf
  Integrating active sensing into reactive synthesis with temporal logic
  constraints under partial observations}
\author{Jie Fu$^{1}$ and Ufuk Topcu$^{1}$% <-this % stops a space
  \thanks{This work is supported by AFOSR grant number
    FA9550-12-1-0302, ONR grant number N000141310778 and NSF CNS award
    number 1446479.}% <-this % stops a space
  \thanks{$^{1}$Jie Fu and Ufuk Topcu are with the Department of
    Electrical and Systems Engineering, University of Pennsylvania,
    Philadelphia, PA, 19104, USA {\tt\small jief,
      utopcu@seas.upenn.edu}.}%
}
\begin{document}

% \begin{comment}
\maketitle

\begin{abstract}
  We introduce the notion of online reactive planning with sensing
  actions for systems with temporal logic constraints in partially
  observable and dynamic environments. With incomplete information on
  the dynamic environment, reactive controller synthesis amounts to
  solving a two-player game with partial observations, which has impractically
  computational complexity. To alleviate the high computational
  burden, online replanning via sensing actions avoids solving the
  strategy in the reactive system under partial observations. Instead,
  we only solve for a strategy that ensures a given temporal logic
  specification can be satisfied had the system have complete
  observations of its environment. Such a strategy is then transformed
  into one which makes control decisions based on the observed
  sequence of states (of the interacting system and its
  environment). When the system encounters a belief---a set including
  all possible hypotheses the system has for the current state---for
  which the observation-based strategy is undefined, a sequence of
  sensing actions are triggered, chosen by an active sensing strategy,
  to reduce the uncertainty in the system's belief. We show that by
  alternating between the observation-based strategy and the active
  sensing strategy, under a mild technical assumption of the set of
  sensors in the system, the given temporal logic specification can be
  satisfied with probability 1.
\end{abstract}
Keywords: Reactive synthesis; Active sensing; Partial observation;
Temporal logic.

\section{Introduction}
\label{sec:intro}

Control synthesis under partial observations has been an important
topic since complete and precise information (about the system and
environment states) during the execution of a controller is often not
available in practice. However, synthesis methods for systems under
partial observations are of high complexity and have limitations in
their applications. With incomplete information, the problem of
synthesizing a controller in a partially observable Markov decision
process (POMDP) has been shown to be PSPACE-complete, even for finite
planning horizons \cite{littman1996algorithms}. When the control
specification is given in temporal logic and the environment is
dynamic and possibly adversarial, the interaction between a system and
its environment can be captured in a two-player partially observable
game with infinite stages, for which the qualititive-analysis problem
under finite-memory strategies is EXPTIME-complete
\cite{chatterjee2010complexity}.

For temporal logic constraints,
synthesis algorithms for stochastic systems modeled as POMDPs have been
studied in \cite{wongpiromsarn2012control,Rangoli2014}. To deal with a
partially observable, dynamic environment, synthesis algorithms for
two-player game with partial observations have been developed under
two qualitative correctness criteria
\cite{Chatterjee2012,Arnold20037}: \emph{sure-winning} and
\emph{almost-sure winning} controllers.  A sure-winning controller
ensures the satisfaction of a specification whereas an almost-sure
winning controller is a randomized strategy and ensures satisfaction
with probability 1.  These solutions rely on a subset construction and
has complexity exponential in the size of the state space
\cite{chatterjee2007algorithms,chatterjee2010complexity}.

An interesting question that has not been investigated much is the
following: Since the high computational complexity is caused by
incomplete information, is it possible to reduce the computational
effort and still ensure correctness of the control design by acquiring
new information at run time? In this paper, we give a method that
provides a partial, affirmative answer to this question.
Particularly, we study a system with actions to obtain information,
referred to as \emph{sensing actions}, and show how to utilize these
actions in a way that a given \ac{ltl} specification is satisfied
almost surely with reduced computational effort.

The new approach in this paper is inspired by \cite{Brafman2012},
where the authors propose a method of online planning with partial
observations and sensing actions as a way to overcome such complexity
since the system only needs to compute a strategy for a finite number
of steps, and replans with new information obtained through sensing
actions. For temporal logic specifications, online planning method in
\cite{Brafman2012} has no correctness guarantee. We propose a similar
framework of active sensing and reactive synthesis under temporal
logic constraints. The basic approach is the following: During control
execution, the system maintains a \emph{belief}, which is a set of
states it thinks the current state must be in based on its partial
observation for the game history. The belief is updated under two
cases: In one of these cases, the system or the environment makes a
move, the belief is updated to the set of states possibly arrived at
as a result of move. Alternatively, the system can activate a sensor,
detecting the value of some propositional formula and revises its
belief according to the additional information obtained through
sensing. In the second case, the system applies an active sensing
strategy.  A sequence of sensor queries are made to obtain the most
useful information for reducing the system's uncertainty in the
current state. The benefit of performing the combined active sensing
and reactive planning is that we can indeed avoid solving a two-player
zero-sum game with partial observations. Rather, we transform the
sure-winning strategy for the system in the same game with
\emph{perfect observations}, into a \emph{randomized, belief-based}
strategy. By construction, the randomized strategy may not be defined
for every belief the system can encounter at run time.  During control
execution, the system alternates between the randomized strategy and
the active sensing strategy. We prove that if the set of available
sensors meets a sufficient condition, the temporal logic specification
can be satisfied with probability 1, i.e., almost surely.

The rest of the paper is organized as follows. We begin with some
preliminaries and the formulation of the problem in
section~\ref{sec:prelim}. Section~\ref{sec:mainresults} presents the
main results on synthesizing provably correct, online reactive
controllers with sensing actions for temporal logic constraints. In
Section~\ref{sec:example} we illustrate the method using a robot
motion planning example in a partially observed environment.
 \normalcolor

\section{Problem formulation and preliminaries}
\label{sec:prelim}
A probability distribution on a finite set $S$ is a function $D : S
\rightarrow [0,1]$ such that $\sum_{s\in S} D(s)=1$. The set of
probability distributions on a finite set $S$ is denoted
$\mathcal{D}(S)$. The support of $D$ is the set $Supp(D)=\{s\in S\mid
D(s) >0\}$. Let $\Sigma$ be a finite alphabet. $\Sigma^\ast$,
$\Sigma^\omega$, and $\Sigma^+$ are sets of strings over $\Sigma$ with
finite length, infinite length, and length greater than or equal $1$,
respectively. Given $u$ and $v$ in $ \Sigma^\ast$, $uv$ is the
concatenation of $u$ with $v$. A string $u \in \Sigma^\ast$ is a
\emph{prefix} of $w\in \Sigma^\ast$ (or $w\in \Sigma^\omega$) if there
exists $v\in \Sigma^\ast$ (or $v\in \Sigma^\omega$) such that
$w=uv$. For a string $w$, the set of symbols occurring infinitely
often in $w$ is denoted $\Inf(w)$. The last symbol in a finite string
$w$ is denoted $\last(w)$.

\subsection{Game, specification and strategies}

Through abstraction for systems with continuous and discrete dynamics,
the interaction of a system and its dynamic environment can be
captured by a labeled finite-state transition system
\cite{Kloetzer2008,ram-hadas}: \[M=\langle S, \Sigma, \delta, s_0,
\calAP, L\rangle\] where \begin{inparaenum}
\item $S = S_1 \cup S_2$ is the set of states. At each state in  $S_1$,
  the system takes an action. At each state in $S_2$, the environment
  takes an action. 
\item $\Sigma=\Sigma_1\cup \Sigma_2$ is the set of
  actions. $\Sigma_1$ is the set of actions for the
  system,
  and $\Sigma_2$ is the set of actions for the environment.
\item $s_0$ is the initial state.
\item $\delta: S\times \Sigma \rightarrow S$ is the transition
  function.
\item $L:S\rightarrow 2^{\calAP}$ is the labeling function that maps a
  state $s\in S$ to a set of atomic propositions $L(s)\subseteq
  \calAP$ that evaluate true at $s$.
\end{inparaenum}

We use a fragment of \ac{ltl} \cite{alur2001Fragment} to specify the
desired system properties such as safety, reachability, liveness and
stability.  Given a temporal logic formula $\varphi$ in this class,
one can always represent it by a \ac{dba} $\calA_\varphi =\langle H,
2^{\calAP}, \delta_\varphi, h_0, F_\varphi \rangle$ where $H$ is the
set of states, $2^{\calAP}$ is the set of alphabet, $\delta_\varphi:
H\times 2^\calAP\rightarrow H$ is the transition function. $h_0$ is
the initial state and $F_\varphi$ is the set of final states. A word
$w= a_0 a_1\ldots \in (2^\calAP)^\omega$ induces a state sequence
$h_0h_1\ldots \in H^\omega$ where $h_{i+1}=\delta_\varphi(h_i,a_i)$,
for all $i\ge 0$. A word $w$ is accepted in $\calA_\varphi$ if and
only if the state sequence $\rho \in H^\omega$ induced from $w$ visits
some states in $F_\varphi$ infinitely often.

A product operation is applied to incorporate the temporal logic
specification into the labeled transition system, giving rise to a
two-player turn-based B\"uchi game between the system (player 1) and
its environment (player 2):
\[
G= \langle Q, \Sigma, T, q_0, F \rangle =M\ltimes \calA_\varphi \]
where the components are defined as follows.
\begin{itemize}
\item $Q=Q_1\cup Q_2$ is the set of states, where $Q_1=S_1\times H$
  and $Q_2= S_2\times H$.
\item $T: Q\times \Sigma \rightarrow Q$ is the
  transition function. Given $(s,h)\in Q$, $\sigma \in \Sigma$, if
  $\delta(s, \sigma)=s'$, then
  $T(q,\sigma)=q'$ where $q'=(s', \delta_\varphi(h, L(s')))$.
\item $q_0=(s_0,\delta_\varphi(h_0,L(s_0)))$ is the initial state.
\item $F\subseteq Q\times F_\varphi$ is a subset of states that
  determines a B\"uchi winning condition.
\end{itemize} 

A \emph{play} in $G$ is either a finite sequence of interleaving
states and actions $\rho =q_0a_0q_1a_1\ldots q_n \in (Q\cup
\Sigma)^\ast Q$ or an infinite sequence $\rho=q_0a_0q_1a_1\ldots \in
(Q\cup\Sigma)^\omega$ such that $q_0$ is the initial state and
$T(q_i,a_i)=q_{i+1}$ for all $i\ge 0$. If $\rho$ is finite, the last
element of $\rho$ is a state, denoted $\last(\rho)$.  An infinite
\emph{play} $\rho$ is \emph{winning} for player 1 in $G$ if and only
if $\Inf(\rho)\cap F\ne \emptyset$.

In game $G$, each state in $Q$ is associated with a truth assignment
to a set $\calP$ of predicates. Note that $\calP $ may not equal
$\calAP$. This association is captured by the \emph{interpretation}
function $\pi$ such that for any $q\in Q$, for any predicate $ p \in
\calP$, $\pi(q)(p) \in \{\truev,\falsev\}$. We write $\pi(q) =\land_{p
  \in \calP} \ell_p$ where $\ell_p =p$ if $\pi(q)(p)=\truev$ and
$\ell_p =\neg p$ if $\pi(q)(p)=\falsev$, $\land$, $\neg$ are the
logical connectives for conjunction and negation, respectively. In the
set $\calP$, there is a predicate $t$ indicating whose turn it is to
play: If $t=1$, then the system takes an action, otherwise the
environment makes a move. It is assumed that the value of $t$ is
globally observable, which means, the system always knows whose turn
it is to play.

We consider the case when the system has partial observation of values
for the set $\calP$ of predicates.  Following \cite{Chatterjee2012},
this partial observation can be defined by an equivalence relation
over the set of states, denoted $\mathcal{R} \subseteq Q\times Q$. Two
states $q$ and $q'$ are \emph{observation-equivalent}, that is,
$(q,q')\in \mathcal{R}$, if both $q$ and $q'$ provide the same state
information observable by the system, i.e., the value of $p\in \calP $
is observable at $q$ if and only if it is observable at $q'$, and
$\pi(q)(p)=\pi(q')(p)$.  We denote the \emph{observations of states}
for the system by $\calO\subseteq 2^{Q}$, which is defined by the
observation-equivalence classes. Clearly, $\calO$ is a partition of
the state space. We define an \emph{observation function} $\obs: Q
\cup \Sigma \rightarrow \calO \cup \Sigma_1\cup \{-\}$ such
that \begin{inparaenum}[1)]\item $q\in \obs(q)$; \item for every
  $q_1$, $q_2\in \obs(q)$, $(q_1,q_2) \in \mathcal{R}$, \item if
  $\sigma \in \Sigma_1$ $\obs(\sigma)=\sigma$; and \item if $\sigma
  \in \Sigma_2$, $\obs(\sigma)= - $
\end{inparaenum}. The last two properties express that the system
observes (knows) which action it performed but does not directly
observe the action of the environment. The information received by the
system on the environment's action is from the effect of that action,
reflected in the observed arrived state.

The observation sequence of a play $\rho= q_0 a_0 q_1\ldots$ is a
sequence $\obs(\rho)=\obs(q_0)\obs(a_0)\obs(q_1)\ldots$.  It is worth
mentioning that two states $q=(s,h)$ and $q=(s',h')$ can be
observation-equivalent even if $h\ne h'$. Therefore, two
observation-equivalent $\rho$ and $\rho'$ can differ in their state
projections onto the set $Q$ of states in the specification automaton
$\calA_\varphi$.

Let $\Pref(G)$ denote the set of finite prefixes of all plays in $G$,
each of which ends with a state in $Q$.  For both players $1$ and $2$,
a \emph{deterministic} strategy for player $i$ is a function $f_i:
\Pref(G) \rightarrow \Sigma_i $ and a \emph{randomized} strategy is a
function $f_i: \Pref(G) \rightarrow \mathcal{D}(\Sigma_i) $.  We say
that player $i$ \emph{follows} strategy $f_i$ if for any finite prefix
$\rho\in \Pref(G)$ at which $f_i $ is defined, player $i$ takes the
action $f_i(\rho)$ if $f_i$ is deterministic, or an action $\sigma \in
Supp(f_i(\rho))$ with probability $f_i(\rho)(\sigma)$ if $f_i$ is
randomized. Since the system has partial information of the states, it
can only execute an \emph{observation-based} strategy $f_1$, in the
sense that if for any two prefixes $\rho$ and $\rho' \in \Pref(G)$, if
$\obs(\rho) = \obs( \rho')$, then $f_1(\rho)=f_1(\rho')$.  A strategy
is \emph{memoryless} if and only if $f_i(\rho)=f_i(\last(\rho))$.  For
B\"uchi game $G$ with complete information, there exists a
\emph{deterministic, memoryless} winning strategy for one of the
players.
% strategy for one of the players.
% In which case a deterministic
% strategy is $f_i: Q_i \rightarrow \Sigma_i$ ($f_i: Q_i\rightarrow
% \mathcal{D}(\Sigma_i)$).

\subsection{Partial observation, belief and sensing actions}

With partial observations, the system keeps track of the play in the
game by maintaining and updating a set $B\subseteq Q$ of states,
referred to as the \emph{belief}, which is the set of states the
system thinks the game can be in, given the observation history. In
which follows, we show how the belief is obtained and updated. The set
of beliefs in the game is denoted $\calB\subseteq 2^Q$.  We define a
function $\alpha: \Pref(G) \rightarrow \calB$ that maps a prefix of
$g$ into a \emph{belief} as follows: given a prefix $\rho =
q_0a_0\ldots q_n$, the belief of the system is $\alpha(\rho) =
\{\last(\rho') \in Q \mid \rho' \in \Pref(G) \text{ and }
\obs(\rho')=\obs( \rho) \}$.

During the interaction with the environment, the system's belief is
updated in two ways:
\begin{inparaenum}[(i)]
\item The system applies a control action, obtains a
new observation of the arrived state, and updates its belief to
one in which the current state could be. 
\item The environment takes some action. The system obtains an
  observation $o\in \cal O$ of the arrived state, and subsequently
  updates its belief that includes its hypothesis for the current
  state. \end{inparaenum} Formally, this process is called
\emph{belief update}, which can be captured by the function
\begin{equation}
\label{beliefupdate}
\update: \calB\times (\Sigma_1 \cup \{-\} )\times \calO \rightarrow
\calB,
\end{equation}
It is reminded that the symbol ``$-$'' is the observation for an
action of the environment. Given a belief $B$, the system takes an
action $a\in \Sigma_1$ and gets an observation $o\in \calO$.  Then it
updates its belief to $B' = o \ \cap \update(B, a, o)= \{q' \mid \exists
q \in B \text{ such that } T(q,a)= q'\}$. If it is the environment's
turn, after the environment takes some action, the system gets an
observation $o\in \calO$ and then updates its current belief $B$ to
$B' =\update(B, -, o) = o\cap \{ q' \mid \exists q \in B,\exists
\sigma \in \Sigma_2 \text{ such that } T(q,\sigma)=q' \} $.

We distinguish a set $\Gamma$ of \emph{sensing actions} for the system
and explain how the sensing actions affects the system's belief as
follows.
\begin{definition}
\label{def:sensingact}
Consider the set $\calP$ of atomic propositions and the set $\Gamma$
of sensing actions. For each sensing action $a\in \Gamma$, there
exists at least one propositional formula $\phi$ over $\calP$ such
that after applying the sensing action $a$, the truth value of $\phi$
is known.  Depending on the value of $\phi$, the system can partition
a belief $B$ into two subsets, expressed by
\[
\knows(\phi,a,B)\defeq (B', B\setminus B'),
\]
where $B'$ is the set of states in which $\phi$ evaluates true and
$B\setminus B'$ is the set of states in which $\phi$ evaluates false.
Hence, if $\phi$ is true, the belief is revised to be $B'$,
otherwise to be $B\setminus B'$.
\end{definition}
To capture both global and local sensing capabilities, for a
given state $q$, we denote $\Gamma_q \subseteq \Gamma$ to be a set of
sensing actions \emph{enabled} at $q$. The set of sensing actions
enabled at a belief $B\subseteq Q$ is $\bigcap_{q\in B} \Gamma_q$.

The following assumption is made for sensing actions.
\begin{assumption}
\label{nosideeffect}
A sensing action will not change the value of variables and/or
predicates in $ \calP$.
\end{assumption}
The assumption is not restrictive because if an action introduces both
physical and epistemic changes, we simply consider it as an ordinary control action
and include it into $\Sigma_1$. We call an action in $\Gamma$
\emph{sensing} to emphasize that it provides information of the
current state, and an action in $\Sigma$ \emph{physical} to emphasize
it changes the state of the game.  We assume that at each turn of the
system, it can either choose a physical action, or several sensing
actions followed by a physical action.

% \subsection{Specification language}
% We use a fragment of Linear temporal logic(\ac{ltl})
% \cite{alur2001Fragment} to specify desired system properties such as
% safety, liveness, persistence and stability.  Given a formula
% $\varphi$ in this class as the system specification, one can always
% represent it by a \ac{dba} $\calA_s=\langle S,2^{\calAP}, T_s,I_s, F_s
% \rangle$ where $S$ is a finite state set, $2^{\calAP}$ is the
% alphabet, $I_s\in S$ is the initial state, and $T_s: S\times
% 2^{\calAP} \rightarrow S$ the transition function.  The acceptance
% condition $F_s$ is a set of states. The run for an infinite word $w=
% w[0]w[1]\ldots \in (2^{\calAP})^\omega$ is an infinite sequence of

We solve the following problem in this paper.
\begin{problem}
\label{problem}
Given a two-player turn-based B\"uchi game $ G= \langle Q, \Sigma, T,
q_0, F \rangle $, and a set $\Gamma$ of sensing actions,
design an observation-based strategy $f: Q^\ast \rightarrow
D(\Sigma_1) \cup \Gamma^\ast$ with which the specification is
satisfied with probability $1$, i.e., almost surely, whenever such a
strategy exists.
\end{problem}

\section{Main results}
\label{sec:mainresults}
For games with partial information, algorithms in
\cite{chatterjee2007algorithms} can be used to synthesize
observation-based controllers which ensure given temporal logic
specifications are satisfied surely, or almost surely, i.e., with
probability 1, whenever such controllers exist. In this paper, we
only consider the cases in which observation-based controllers do not
exist and thus require additional information at run time for
satisfying given temporal logic specifications. We distinguish two
phases in the online planning: \emph{Progress} phase and
\emph{sensing} phase. As the names suggest, during the progress phase,
the system takes physical actions in order to satisfy the temporal
logic constraints, and during the sensing phase, the system takes
sensing actions to reduce the uncertainty in its belief for the
current game state. The transition from one phase to another will be
explained after we introduce the methods for synthesizing strategies
used in both phases.
\subsection{A belief-based strategy for making progress}
For a game with partial observation, % a randomized, observation-based
% strategy for the system $f: Q^\ast \rightarrow D(\Sigma_1)$ can be
% transformed into
we aim to synthesize a belief-based, memoryless and randomized
strategy $f_P : \calB \rightarrow \calD( \Sigma_1)$ that can be
applied for making progress towards satisfying the given \ac{ltl}
fragment formula $\varphi$. % as follows: Given $q_0q_1\ldots
% q_n$ for which $f(\cdot)$ is defined, we can find a unique play
% $\rho=q_0a_0q_1a_1\ldots q_n$ thanks to the determinism in the
% transition function $T$ of $G$. Then let $f_P(\alpha(\rho))=
% f(q_0q_1\ldots q_n)$. Note that for any two plays $\rho, \rho'\in
% \Pref(G)$, if $\alpha(\rho)=\alpha(\rho')$, then it must be the case
% that $\obs(\rho)=\obs(\rho')$ and by the definition of
% observation-based strategies, $f$ on the state projections of two
% plays output the same distribution of actions. Thus, the belief-based
% strategy $f_P$ is well-defined.

In the two-player B\"uchi game $G$, the deterministic sure-winning
strategy $\mathsf{WS}: Q \rightarrow \Sigma_1$ can be computed (with
methods in \cite{Gradel2002}) but requires complete information to
execute at run time. The belief-based strategy $f_P$ is constructed
from the sure-winning strategy $\mathsf{WS}$ in the following way: Let
$\mathsf{Win}_1\subseteq Q$ be the set of states at which
$\mathsf{WS}$ are defined. Given $B \in \calB$, let
\begin{multline*}
\mathsf{Progress}(B) = \bigcup_{q\in B} \mathsf{WS}(q), \text{ and }
\\
\mathsf{allow}(B)= \bigcap_{q\in B} \mathsf{allow}(q),\\
\text{ where } \mathsf{allow}(q)=\{\sigma\in \Sigma_1\mid T(q, \sigma)\in
\mathsf{Win}_1\}.
\end{multline*}
For each state $q\in B$, the sure-winning strategy will suggest action
$\mathsf{WS}(q)$ to be taken by the system, which is then included
into a set $\mathsf{Progress}(B)$. The set $\mathsf{allow}(B)$ is a set of actions
with the following property: No matter in which state of $B$ the game
is, by taking an action in $\mathsf{allow}(B)$, the next state will still be
one for which the sure-winning strategy is defined. Then, if $
\mathsf{Progress}(B) \subseteq \mathsf{allow}(B)$, we let $f_P(B)(\sigma) =
\frac{1}{\abs{\mathsf{Progress}(B)}} $ for each $\sigma \in
\mathsf{Progress}(B)$. Otherwise, $f_P$ is undefined for $B$. Note that since
the computation $f_P$ can be essentially reduced to computing the
interaction of two sets, there is no need to compute $f_P$ for all
possible subset of $Q$. Rather, we can efficiently compute $f_P$ for
each belief $B$ encountered at run time.

We have transformed the sure-winning strategy with complete
information in the B\"uchi game into a randomized, belief-based
strategy.  During control execution, the system maintains its current
belief. At each turn of the system, after applying an action $\sigma
\in \Sigma_1$ at the state $B$, the system receives an observation
$o\in \cal O$, updates its belief to $B'= \update(B, \sigma, o)$. When
it is a move made by the environment, the system obtains another
observation $o'\in \calO$, updates its belief to $ B''= \update(B', -,
o')$. The system applies $f_P(B'')$ as long as $f_P$ is defined for
$B''$. When $f_P$ is undefined for the current belief $B$, then we
switch to the sensing phase for actively acquiring more information to
reduce the uncertainty in its current belief.

\subsection{An active sensing strategy for reducing uncertainty}
\label{sec:activesense}

During the progress phase with the randomized, belief-based strategy
$f_P$, if the system runs into a belief at which $f_P$ is undefined,
it needs to update its belief through sensing until either it finds
itself in a state for which $f_P$ is defined, or it cannot further
refine its belief: A belief $B$ cannot be refined if for any sensing
action $a$ enabled at $B$ and for any formula $\phi$ such that
$(B_1,B_2) = \knows(\phi, a, B)$, it holds that for either $i=1$ or
$i=2$, $B_i = B$.  We represent the process of belief revision with
sensing actions as a tree structure, referred to as a \emph{belief
  revision tree}, and then propose a synthesis method for an active
sensing strategy using the belief revision tree.

Given a belief $B^o \in \calB$, the \emph{belief revision tree} with
the root $B^o$ is a tuple $\br(B^o) = \langle \calN, \calE \rangle$,
where $\calN$ is the set of nodes in the tree, consisting a subset of
beliefs, and $\calE \subseteq \calN \times \Gamma\times \calN$ is the
set of edges. It is constructed as follows.
\begin{enumerate}
\item The root of the tree is $B^o$.
\item At each node $B \in \calN$, for each enabled sensing action
  $a\in \Gamma_B$, if there exists a formula $\phi$ such that
  $(B_{1},B_{2})=\knows(\phi, a, B)$ and both $B_1, B_2 $ are not
  empty, then we add two children $B_1,B_2$ of $B$, and include edges
  $(B, a, B_1)$, $(B, a, B_2)$ into the edges $\cal E$.
\item A node $B$ is a leaf of the tree if and only if
  either \begin{inparaenum}[1)]\item $B$ cannot be further revised by
    any sensing action, or \item $f_P$ is defined for $B$.
\end{inparaenum}
\end{enumerate}

% An edge $(B, a, B')\in \calE$ is also denoted as $B\xrightarrow{ a}
% B'$. We extend the edges in the usual way: If $B\xrightarrow{u}B''$
% and $B''\xrightarrow{v} B'$, then $B\xrightarrow{uv} B'$, for $u, v\in
% \Gamma^+$.

The active sensing strategy $f_S: \calB \rightarrow \Gamma$ is
computed as follows. First, in the tree $\br(B^o)$, we compute a set
of target nodes $\mathsf{Reach} \subset \calN$ such that a node $B'$ is
included in $ \mathsf{Reach}$ if and only if $f_P(B')$ is defined. The
objective is to apply the least number of sensing actions in order to
reach a belief in $\mathsf{Reach}$ for which $f_P$ is defined. For this
purpose, we have the following recursion:
\begin{enumerate}
\item $X_0=\mathsf{Reach}$, $i=0$.
\item $X_{i+1}= X_i\cup \{ B\in \calB \mid \exists a\in \Gamma, \text{
    such that } \forall B' \in \calB, (B, a, B')\in \calE, B'\in X_i
  \}$ and let $f_S(B)=a$. In other words, a belief $B$ is
  included into $X_{i+1}$ if there exists a sensing action $a$ such
  that when $a$ is applied at  $B$, no matter which
  belief the system might reach, it must be in $X_i$.
\item Until $i$ is increased to some number $m\in \nat$ such that $
  X_{m+1}=X_m$, we output the sensing strategy $f_S$ obtained so
  far.
\end{enumerate}
% Note that although the computation of belief revision trees is
% independent from the state $s$, the sensing strategy $f_S$ is
% dependent on $s$.
We denote $X_m= \mathsf{attr}(\mathsf{Reach}) $, following the notion of an
\emph{attractor} of the set $\mathsf{Reach}$. For any state in
$\mathsf{attr}(\mathsf{Reach})$, there exists a sensing strategy $f_S$ such that for
\emph{whatever outcome} resulted by applying sensing actions, the
system can arrive at some belief in $\mathsf{Reach}$ in finitely many steps by
following $f_S$. Furthermore, it can be proven that $f_S$ minimizes
the number of sensing actions required for the sensing phase under the
constraint that the system will not run into a dead end, which is a
belief that cannot be further refined yet is undefined by $f_P$. The
number of sensing actions during the sensing phase is upper bounded by
the index $i$ for which $B^o\in X_i$ and $B^o\notin X_{i-1}$. The
proof follows from the property of attractor \cite{Gradel2002} and is
omitted here.

\paragraph*{Remark}It is worth mentioning that for a given belief $B$,
the active sensing strategy is unique. Thus, we can store and
continuously update a set of active sensing strategies synthesized at
run time: When the system encounters a belief $B$ for which $f_P$ is
undefined but it has seen before, it can use the stored active sensing
strategy for $B$ without recomputing a new one.  For a large-scale
system with a large number of sensing actions, one can also
pre-compute a library of active sensing strategies and then augment
the library with new active sensing strategies computed at run time.

\subsection{A composite, almost-sure winning strategy}
\label{subsec:almostsurewin}
At run time, the system alternates between strategy $f_P$ for making
progress and strategy $f_S$ for refining its belief. We name the
system's strategy at run time a \emph{composite} strategy, denoted
$f:\calB \rightarrow D(\Sigma_1) \cup \Gamma$, defined by,
\begin{equation}
\label{eq:composite}
f(B) = \left\{ \begin{matrix}
f_P(B) & \text{ if } f_P(B) \text{ is defined.}\\
f_S(B) & \text{ if } f_S(B) \text{ is defined.}\\\end{matrix}\right.
\end{equation}
Note that by construction, the domains of $f_P$
and $f_S$ is always disjoint.

The following assumption provides a sufficient condition for avoiding dead-ends
at run time. 
\begin{assumption}
\label{assume1}
For each state $B$ encountered during the progress phase, if $f_P(B)$
is undefined, then $f_S(B)$ is defined.
\end{assumption}

Since we cannot predict which beliefs the system might have during
control execution with online planning, in the extreme case, for each
predicate $p\in \calP$, we need to have a sensing action or a
combination of sensing actions to detect its truth value. However,
this condition is not necessary and may include some sensing actions
that will never be used at run time.  As the system does not need to
know the \emph{exact} state by extensive sensing, it is at the
system's disposal whether to apply a sensing action and what shall be
applied.

Next we prove the correctness of the composite strategy. To this end,
we recall some property in the solution for B\"uchi games with
complete information from \cite{Gradel2002}: The winning region of the
B\"uchi game $G$ can be partitioned as $\mathsf{Win}_1=
\bigcup_{i=0}^m W_i$ for some $m \in \nat$, $m \ge 0$. For any state
$q\in \mathsf{Win}_1$, there exists a unique ordinal $i$ such that
$q\in W_i$. If $q\in Q_1\cap W_i$ for some $0<i \le m$, then the
winning strategy on $q$ outputs $\sigma \in \Sigma_1$, with which the
system reaches a state $q' \in W_{i-1}\cap Q_2$. If $i=0$, then with
the action $\mathsf{WS}(q)$, we arrive at a state $q' \in
\mathsf{Win}_1$. If $q\in Q_2$, then for any action $\sigma \in
\Sigma_2$ enabled at $q$, $T(q,\sigma) \in W_{i-1}$ if $i \ne 0$, or
$q'\in \mathsf{Win}_1$ otherwise.

\begin{lemma}
  Given a game $G=\langle Q, \Sigma, T, q_0, F \rangle$. Let $B_0 =
  \obs(q_0)$ be the initial belief. If Assumption~\ref{assume1} is
  satisfied and $q_0 \in \mathsf{Win}_1$, the composite strategy $f$
  defined by \eqref{eq:composite} ensures that some states in $F$ of
  $G$ is infinitely often visited with probability 1.
\end{lemma}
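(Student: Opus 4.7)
The plan is to combine three ingredients: a belief-level invariant that, after any initial sensing transient, the current belief $B$ lies inside the full-information winning region $\mathsf{Win}_1$; termination of every sensing phase in finitely many steps, which guarantees infinitely many progress-phase system turns along any infinite play; and a rank-based argument showing that at each such progress turn the randomization built into $f_P$ assigns a history-independent positive probability to an action that strictly reduces the B\"uchi rank of the hidden true game state. A Borel--Cantelli estimate then drives the rank to $0$ infinitely often, which is exactly the almost-sure B\"uchi condition.

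First I would prove the invariant $B\subseteq\mathsf{Win}_1$ by induction on the length of the observation history, starting once the belief first enters $\mathsf{Reach}$. The base case is handled by $q_0\in\mathsf{Win}_1$ together with Assumption~\ref{assume1}: if $f_P$ is undefined on $B_0=\obs(q_0)$ then $f_S(B_0)$ is defined, so $B_0$ lies in $\mathsf{attr}(\mathsf{Reach})$, and a finite sequence of sensing actions refines it to some $B\in\mathsf{Reach}$. Every element of $\mathsf{Reach}$ is a subset of $\mathsf{Win}_1$ because $f_P(B)$ is only defined when $\mathsf{WS}(q)$ exists for every $q\in B$. For the inductive step, a physical system action drawn from $\mathsf{Progress}(B)\subseteq\mathsf{allow}(B)$ keeps the updated belief inside $\mathsf{Win}_1$ by the defining property of $\mathsf{allow}(B)$; an environment move from $B\subseteq Q_2\cap\mathsf{Win}_1$ preserves the invariant by the closure of $\mathsf{Win}_1$ under player~2's actions---a standard property of B\"uchi winning regions recalled just above the lemma; and a sensing action only refines $B$ by Assumption~\ref{nosideeffect} and therefore cannot leave $\mathsf{Win}_1$. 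The attractor property of $\br(B)$ from Section~\ref{sec:activesense} further bounds each sensing phase by the least $i$ with $B\in X_i$, so along any infinite play the system takes infinitely many progress-phase actions.

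Next I would invoke the rank decomposition $\mathsf{Win}_1=\bigcup_{i=0}^{m}W_i$ recalled immediately before the lemma: from $q\in Q_1\cap W_i$ with $i>0$, the action $\mathsf{WS}(q)$ moves the game to a state of rank $i-1$; from $q\in Q_2\cap W_i$, every enabled environment action leads to rank at most $i-1$; and rank $0$ corresponds to a visit to $F$. Let $q\in B$ denote the true hidden state at a progress-phase system turn. Since $f_P(B)$ is uniform over the set $\mathsf{Progress}(B)$ and $\mathsf{WS}(q)\in\mathsf{Progress}(B)$ by construction, the action $\mathsf{WS}(q)$ is drawn with probability at least $1/|\mathsf{Progress}(B)|\ge 1/|\Sigma_1|$, a lower bound independent of the past. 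Combined with the rank-non-increasing behaviour of the environment inside $\mathsf{Win}_1$, the rank of the true state strictly decreases after each such system turn with probability at least $1/|\Sigma_1|$; sensing actions in between do not affect the rank because they do not change the physical state. Iterating and using boundedness of the rank by $m$, a geometric/Borel--Cantelli estimate yields that rank $0$---equivalently $F$---is visited infinitely often with probability $1$, i.e., $\Inf(\rho)\cap F\ne\emptyset$ almost surely.

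The hard part is the last step: I must reason about the hidden true state inside the belief and argue that, even though the system cannot tell which $q\in B$ is realised, $f_P$ still assigns positive, history-independent probability to the sure-winning action for that particular $q$. The crucial observation is that $\mathsf{Progress}(B)$ is a \emph{set}, so $\mathsf{WS}(q)$ appears exactly once and gets probability at least $1/|\mathsf{Progress}(B)|$ regardless of which $q\in B$ is the truth. A secondary subtlety is that interleaved sensing phases must not interfere with rank progress; this follows from Assumption~\ref{nosideeffect}, which guarantees that sensing leaves the physical state---and therefore the rank---unchanged between two successive progress turns.
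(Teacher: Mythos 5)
Your proposal is correct and follows essentially the same route as the paper's own proof: it uses the rank partition $\mathsf{Win}_1=\bigcup_{i=0}^m W_i$, the fact that the uniform distribution over $\mathsf{Progress}(B)$ assigns a history-independent positive probability to the sure-winning action $\mathsf{WS}(q)$ for the hidden true state, and a geometric estimate to conclude that $W_0\subseteq F$ is reached and then revisited infinitely often with probability~1. The only difference is that you spell out the belief invariant $B\subseteq\mathsf{Win}_1$ and the finite termination of each sensing phase, details the paper leaves implicit via Assumption~\ref{assume1} and the attractor construction.
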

\begin{proof}Consider an arbitrary belief $B \in \calB$ for which
  $f_P$ is defined.  By definition of $f_P$, for each $\sigma \in
  \mathsf{Progress}(B)$, the probability of choosing action $\sigma$ is
  $\frac{1}{u}$, where $u=\abs{\mathsf{Progress}(B)}$.  If the actual state is
  $q$ and $q \in W_i$, for some $i\ne 0$, then with probability
  $\frac{1}{u}$, the system will reach a state in $W_{i-1}$.  Thus,
  the probability of the next state being in $W_{i-1}$ is $\frac{1}{u}
  \ge \frac{1}{\abs{Q}} >0$. For other $\sigma'\in f_P(s)$, $\sigma '
  \ne \mathsf{WS}(q)$, the next state after taking $\sigma'$ is in
  $W_j$ for some $0\le j \le m$.  Let $\Pr(q, \lozenge^i W_0)$ denote
  the probability of reaching $W_0$ from state $q$ in $i$ turns. When
  system applies the strategy $f$, it is $\Pr(q,\lozenge^i W_0) \ge
  (\frac{1}{\abs{Q}})^i >0$ and the probability of \emph{not} reaching
  $W_0$ in $i$ turns is \emph{less than or equal to} $
  1-(\frac{1}{\abs{Q}})^{i} \le 1- (\frac{1}{\abs{Q}})^{m+1} =r <1 $
  where $m+1$ is the total number of partitions in
  $\mathsf{Win}_1$. If after $i$ steps the state is not in $W_0$, it
  must be in $W_j$ for some $ 0 < j\le m$, and again the probability
  of not reaching $W_0$ in $m$ steps is less than or equal to $r$.
  Therefore, under the policy $f$, the probability \emph{eventually}
  reaching $W_0$ from any state $q \in \mathsf{Win}_1$ is
  $\mathrm{Pr}(v,\lozenge W_0) =\lim_{k\rightarrow \infty}
  \mathrm{Pr}(v,\lozenge^k W_0) =\lim_{k\rightarrow \infty}( 1-
  \mathrm{Pr}(v, \neg \lozenge^k W_0)) =\lim_{k\rightarrow \infty} (
  1-r^{k/m}) =1-\lim_{k\rightarrow \infty} r^{k/m} =1$.

  Once entering $W_0$, the system will take an action to remain in
  $\mathsf{Win}_1$, and the above reasoning applies again. In this
  way, in the absence of dead ends (Assumption~\ref{assume1}), the
  system can revisit the set $W_0 $ of states with probability 1 by
  following the composite strategy $f$.  Since $W_0\subseteq F$, the
  probability of system always eventually visiting some states in $F$
  is 1.
\end{proof}

To conclude this section, Algorithm~\ref{alg:sensingact} describes the
procedure of online planning with sensing actions.

\vspace{-2ex}
\begin{figure}[hb]
\centering
\includegraphics[width=0.45\textwidth]{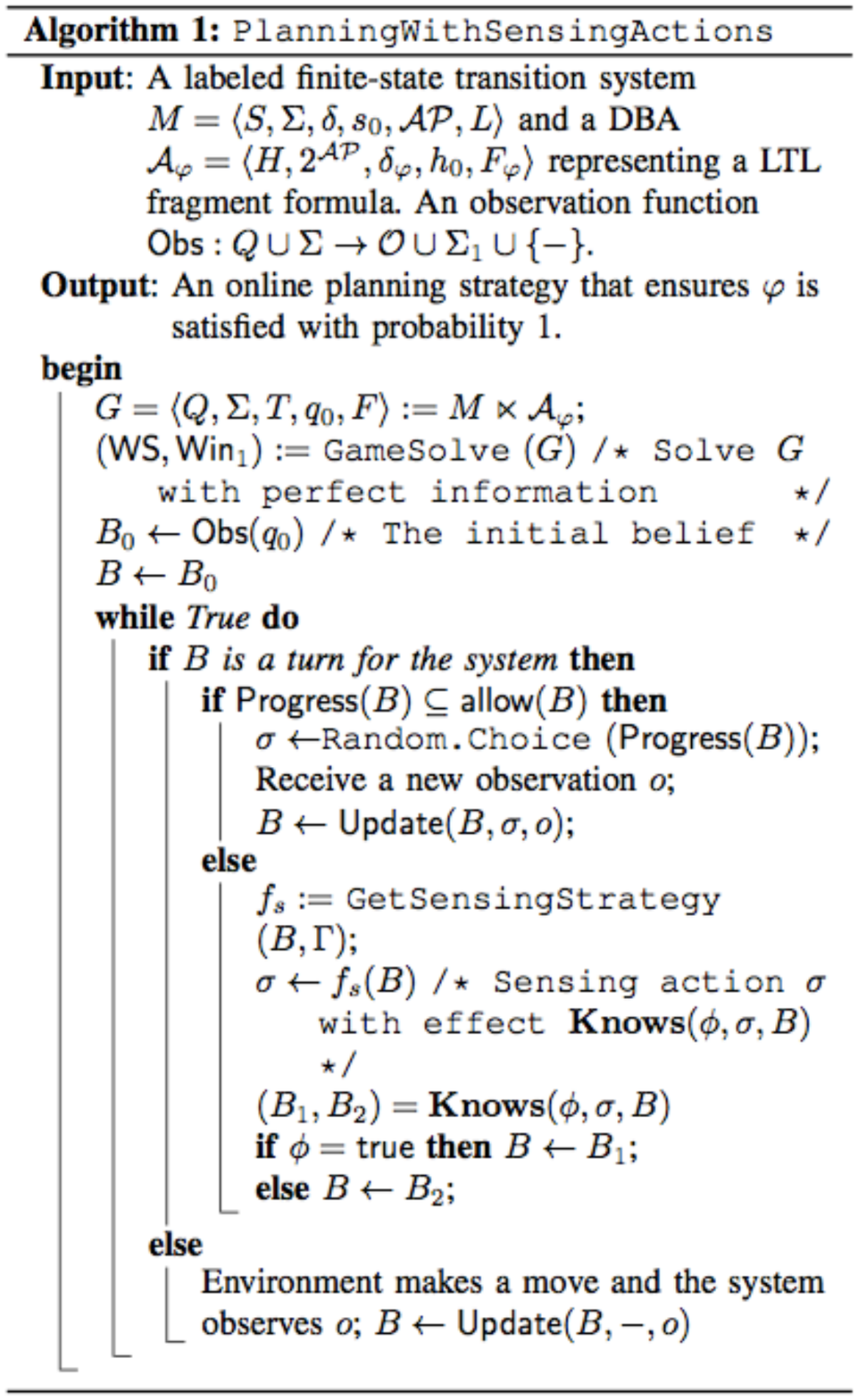}
\caption{Algorithm: PlanningWithSensingActions}
\label{alg:sensingact}
\end{figure}

% The proof and algorithm are partially inspired by the synthesis
% algorithm for games with partial observations
% \cite{chatterjee2007algorithms}, in which the set of states from which
% the system can achieve the goal with probability 1 is computed and the
% observation-based strategy ensures the system will never leave such a
% set. However, here we do not compute this set. Rather, we replace it
% with the sure-winning region in the game with complete
% information. The reason that we can do so is because the assumption on
% the set of sensing actions: It provides us the possibility for
% reducing the uncertainty whenever such a reduction is needed and in
% the extreme case, by exercising sensing, we can reduce a belief to a
% singleton, which only contains the current state in the system.

\section{Examples}
\label{sec:example}
We apply the algorithm to a robotic motion planning example, which is
a variant of the so-called ``Wumpus game'' in a $7\times 7$ gridworld.
Figure~\ref{fig:wumpusgame} consists of one mobile robot, one monster
called ``Wumpus''. The robot is capable of moving in eight compass
directions with actions `N', `S' , `E', `W', `NE', `NW', `SE', `SW'
(horizontally, vertically and diagonally), one step at a time. The
robot and the Wumpus does not move concurrently. The Wumpus can move
in four compass directions with actions `N', `S', `E' and `W' within a
restricted area $\mathsf{Region}$ and emits stench to its surrounding
cells. The objective of the robot is to infinitely revisit region
$R_1$, $R_2$, and $R_3$ in this order, while avoiding running into the
Wumpus. Formally, the temporal logic formula is $\varphi= \square
\lozenge (x_r,y_r) = R_1 \land \lozenge \left((x_r,y_r) = R_2 \land
  \lozenge (x_r,y_r) = R_3\right)\land \square \neg (x_r=x_w\land
y_r=y_w)$ where $(x_r,y_r), (x_w,y_w)$ are the positions of the robot
and the Wumpus, respectively.  Yet, the robot only knows his own
position. For this case of partial observation, without the inclusion
of sensing actions, it can be shown that with the algorithms in
\cite{chatterjee2007algorithms}, observation-based, sure-winning
strategies and almost-sure winning strategies do not exist.

\begin{figure}[ht]
\centering
\includegraphics[width=0.25\textwidth]{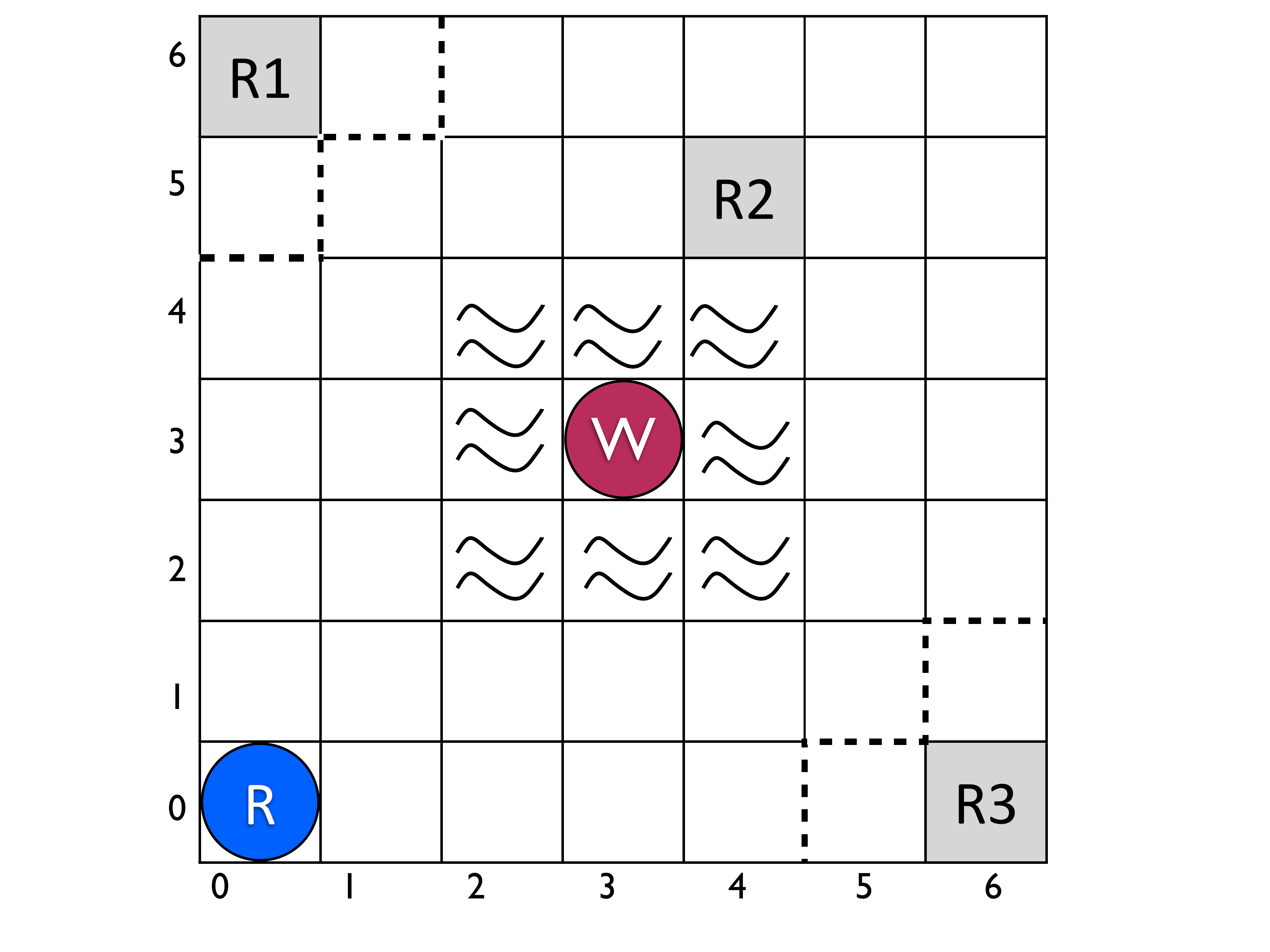}
\caption{The gridworld with a robot (R) and the Wumpus (W). The grey
  cells are regions $R1$, $R2$ and $R3$. The Wumpus is restricted to
  the area inside the dash lines. The stenches emitted by the Wumpus
  are represented by waves.}
\label{fig:wumpusgame}
\end{figure}

Here, we introduce a set of sensing actions to the game. For the robot
to know the position of the moving obstacles, it needs to apply a
sensing action --- $\mathsf{smell}(x,y)$ to detect if there exists
stench at cell $(x,y)$. Thus, when the robot applies
$\mathsf{smell}(x,y)$, if the result is True, then the Wumpus must be
some cells in the set $S=\{(x',y')\mid x' \le x+1, y'\le y+1, x',y'\in
\nat\} \cap \mathsf{Region}$. Otherwise, it is not possible that the
Wumpus is in any cell in $S$.

We illustrate how the robot updates his belief using sensing action
$\mathsf{smell}(x,y)$ where $(x,y)$ is a cell in the
gridworld. Suppose that the robot does not know where the Wumpus is
and hypothesizes it can be in any cell in the $\mathsf{Region}$. Once
it applies the sensing action $(2,2)$, since the cell has stench and
the sensor returns True. Then, immediately the robot will know the
Wumpus is in one of the cells in the set $S=
\{(1,1),(2,1),(3,1),(1,2),(2,2), (3,2), (3,1), (3,2), (3,3)\}$,
because only if the Wumpus is in a cell of $S$, there can be stench in
cell $(2,2)$.

From the numerical experimental result, after $1000$ steps (a step
includes either a robot's (sensing or physical) action or a movement
of the Wumpus), the robot visited the set $F$ in the formulated
two-player game $G$ $14$ times and can continue to visit $F$ infinite
often. In Figure~\ref{fig:beliefUpdate} we show the belief updates by
applying alternatively the exploitation strategy and active sensing
strategy for the initial $100$ steps. It is observed that the maximum
cardinality of the belief set is $43$ over the control execution,
which means that the robot thinks the Wumpus can be in any cell in its
restricted region. However, if there is no danger of running into the
Wumpus in a few next steps, there is no need to exercising any sensing
action. The implementations are in Python on a desktop with Intel(R)
Core(TM) i5 processor and 16 GB of memory. The average time for the
robot making a decision is $8.55\times 10^{-4}$ seconds. The
computation of the product game took $40.14$ seconds and the winning
strategy under complete information is computed within $14$ seconds.

\begin{figure}[ht]
\centering
\includegraphics[width=0.5\textwidth]{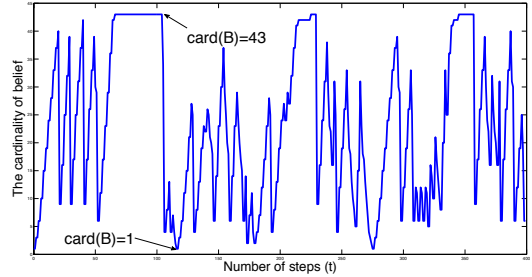}
\caption{The update in the number of possible Wumpus' positions in the
  system's belief. }
\label{fig:beliefUpdate}
\end{figure}

\section{Conclusions}
\label{sec:conclusions}
% We presented synthesis methods of online exploitation and exploration
% strategies with sensing actions for systems with partial
% information. The derived observation-based strategy is provably
% correct with respect to a given temporal logic constraint, provided
% that \begin{inparaenum}[i)] \item a sure-winning control strategy
%   exists had the system have complete information of game states;
%   and \item a sufficient condition on the set of applicable sensing
%   actions is satisfied.
% \end{inparaenum} 
Our work shows that when additional information can be obtained
through sensing actions, one can transform a sure-winning strategy
with complete information to a belief-based, randomized strategy,
which is then combined, at run time, with an active sensing strategy
to ensure a given temporal logic specification is satisfied with
probability 1. The synthesis method avoids a subset construction for
solving games with partial information. Meanwhile, the active sensing
strategy leads to a cost-efficient way of sensor design: Although we
require a sufficient set of sensing actions to avoid dead-ends at run
time, the system minimizes the usage of sensing actions by asking the
most revealing queries, depending on what specification is to be
satisfied, and how much uncertainty the system has about the game
state at run time. In future work, we will consider more examples for
practical robotic motion planning under partial observations. It is
also important to consider the uncertainty in the sensors. For
example, a sensor query might return a probabilistic distribution over
a set of states, rather than a binary answer to proposition logical
formulae considered herein. For this extension, we are currently
investigating modifications that need to be made to account for
delays, uncertainty in the information provided by the
sensors. % Fourth, to further reduce the amount of information required at
% run time, one can consider permissive strategies
% \cite{bernet2002permissive} for exploitation.

% \paragraph*{Acknowledgements} The authors would like to thank
% R\"udiger Ehlers of the University of Bremen for inspiring and insightful
% discussions. This work is supported by
\balance

\end{document}